\documentclass[11pt]{IEEEtran}

\def\BibTeX{{\rm B\kern-.05em{\sc i\kern-.025em b}\kern-.08em
    T\kern-.1667em\lower.7ex\hbox{E}\kern-.125emX}}

\usepackage{times}
\usepackage{subfigure}
\usepackage[dvips]{graphics,graphicx}
\DeclareGraphicsExtensions{.jpg,.pdf,.eps,.png}

\newtheorem {lemma}{Lemma}

\begin{document}

\title{Two channel paraunitary filter banks based on linear canonical transform}

\author{Sudarshan Shinde \\
        Computational Research Laboratories, \\
        Pune,INDIA. \\
Email:sudarshan\_shinde@iitbombay.org}


\maketitle




\begin{abstract}
In this paper a two channel paraunitary filter bank is proposed , which is based on linear canonical transform, instead of discrete Fourier transform. Input-output relation for such a filter bank are derived in terms of polyphase matrices and modulation matrices. It is shown that like conventional filter banks, the LCT based paraunitary filter banks need only one filter to be designed and rest of the filters can be obtained from it. It is also shown that LCT based paraunitary filter banks can be designed by using conventional power-symmetric filter design in Fourier domain.   
\end{abstract}

\begin{keywords}
Sub-band decomposition, filter banks, linear canonical transform. paraunitray, power-symmetric, fractional Fourier transform.
\end{keywords}

\section{Introduction}
\subsection{Motivation}

Filter banks are now well known tool for time-frequency analysis of a signal \cite{bk_vaidyanathan_filterbank,bk_vetterli_filterbank}. Given a signal $x(n)$, a two channel filter bank (as shown in fig.\ref{fig_fb}), splits its Fourier transform (FT) spectrum into two by a low-pass and  a high-pass filter $\{h_{0}(n), h_{1}(n)\}$ respectively. Output of these filters are then down-sampled to produce analysis filter bank output $\{y_{0}(n),y_{1}(n)\}$. To reconstruct the signal $\{y_{0}(n),y_{1}(n)\}$ are passed from a synthesis filter bank. The synthesis filter bank upsamples $\{y_{0}(n),y_{1}(n)\}$, and after passing them from synthesis filters $\{g_{0}(n),g_{1}(n)\}$, adds them to form the filter bank output $\hat{x}(n)$. The filter bank introduces various distortions like alias distortion, magnitude distortion and phase distortion into the signal. It can be shown that by careful design of the  analysis and synthesis filters, all these distortions could be eliminated and $\hat{x}(n)$ could be made a delayed version of $x(n)$ by some integer $K$. Such filter banks are called perfect reconstruction filter banks (PRFB)\cite{bk_vaidyanathan_filterbank,bk_vetterli_filterbank}.

Discrete Fourier transform(DFT), and its generalization z-transform (ZT) find extensive use in the filter bank theory. Many key theorems of filter bank theory are expressed in Fourier domain or z-domain. Terms like low-pass and high-pass make sense in the Fourier domain. Design of the filters is carried out by giving their specifications in the Fourier domain. 

In another development, the Fourier transform(FT) has been generalized to fractional Fourier Transform (FrFT), by looking at FT as a transform that rotates the signal in time-frequency plane by $\pi/2$ \cite{bk_ozaktas,art_almeida_frft}. It is found that the FrFT is a special case of three parameter family of transforms called Linear Canonical Transform(LCT)\cite{art_stern_samplingLCT,bk_wolf_integralt}. Given a continuous time signal $x(t)$, its LCT is given by

\begin{equation}
X_{a,b,c,d}(u) = \sqrt{\frac{1}{j2\pi b}}\int_{\infty}^{\infty}{x(t)e^{j \left(\frac{a}{2b}t^{2} - \frac{1}{b}ut + \frac{d}{2b} u^{2} \right)}} dt
    \label{eq_cont_lct} 
\end{equation}     
where $ad-bc = 1$.

In order to define FrFT and LCT on a discrete signal, many sampling theorems in FrFT domain and LCT doamin have been derived \cite{art_li_sampling_lct,art_candan_series,art_stern_samplingLCT,art_tao_samplingFRT,art_Zhao_convertLCT}. It is shown \cite{art_stern_samplingLCT}  that if a signal is band-limited in the LCT domain, then it can be reconstructed by its uniformly sampled discrete samples. Based on uniform sampling, a discrete time LCT (DTLCT) is introduced in \cite{art_Zhao_convertLCT}, and sampling rate conversion theorems for this DTLCT are also derived. 

Since sampling rate conversion is used in the filter banks, it is natural to ask if a sub-band decomposition scheme could be developed which is based on the DTLCT. In particular, we want to know how a filter-bank can be designed that splits the DTLCT spectrum of an input signal into two sub-band signals.

\subsection{Contributions of the paper}
In this paper the sampling rate conversion theorems derived in \cite{art_Zhao_convertLCT} are used to develop a sub-band decomposition scheme based on DTLCT. A convolution is defined in the DTLCT domain which is suitable for the sub-band decomposition scheme proposed. Input-output relation of an LCT filter bank in polyphase and modulation (or alias component) domains is derived. Condition for a two channel LCT filter bank to be paraunitary (PU) is also derived. It is shown further that a two channel PU LCT filter bank can be designed from a power-symmetric filter in ZT domain.

\subsection{Notations}
Given a signal $x(n)$, its LCT is denoted by $X(\omega)$, FT by $X(e^{j\omega})$ and ZT by $X(z)$. Sampling rate of $x(n)$ is denoted by $T_{x}$. Downsampling $x(n)$ by an integer $N$ is denoted by $x(n) \downarrow N$, and upsampling by $N$ is denoted by $x(n) \uparrow N$. Matrices are denoted by boldface letters. Complex conjugate of $X(\omega)$ is denoted by $X_{*}(\omega)$. Conjugate transpose of a matrix ${\bf H}(\omega)$ is denoted by ${\bf H}_{*}^{T}(\omega)$. For a scalar $H(z)$, $H_{*}(z^{-1})$ is denoted by $\tilde{H}(z)$, and for a matrix ${\bf H}(z)$, ${\bf H}_{*}^{T}(z^{-1})$ is denoted by $\tilde{{\bf H}}(z)$.
  
\section{Review of Filter Banks and DTLCT}
\label{sec_review}
\subsection{Review of DTLCT}
Let $x(n)$ be a discrete signal, obtained by uniform sampling of a continuous signal $x_{c}(t)$ with a sampling period $T_{x}$, so that $x(n) = x_{c}(nT_{x})$. The DTLCT of $x(n)$, with parameters $(a,b,c,d)$ is defined as \cite{art_Zhao_convertLCT},  

\begin{equation}
X(\omega) = \sqrt{\frac{1}{j2\pi b}}
            \sum_{n \in I}{x(n)\exp\left[ \frac{j}{2} 
                                         \left(
                                           \frac{a}{b}n^{2}T_{x}^{2}
                                           -2n\mbox{sgn}(b)\omega
                                           +\frac{db}{T_{x}^{2}} \omega^{2}  
                                         \right) 
                                  \right]}, b \not= 0,
    \label{eq_dtlct} 
\end{equation} 
where $\mbox{sgn}(b)$ is sign of $b$. In this paper we will assume $b > 0$ to avoid $\mbox{sgn}(b)$ term. This does not result in loss of generality, since all the results in this paper can be obtained for $b < 0$ case in the same way they are obtained for $b > 0$ case.

Upsampling a signal by $L$ is defined as inserting $L$ zeros between two samples. This changes the sampling time of the upsampled signal. Thus if $y(n) = x(n) \uparrow L$, then $T_{y} = T_{x}/L$. It can be shown that \cite{art_Zhao_convertLCT}
\begin{equation}
Y(\omega) = X(L\omega)
    \label{eq_upsample} 
\end{equation} 

Similarly downsampling a signal by $M$ is defined as dropping $M-1$ samples out of a block of $M$ samples and retaining only the first sample. This also changes the sampling time of the downsampled signal. Thus if $y(n) = x(n) \downarrow M$, then $T_{y} = MT_{x}$. It can again be shown that \cite{art_Zhao_convertLCT}
\begin{equation}
Y(\omega) = \frac{1}{M} \sum_{m=0}^{M-1}{\exp\left[
                                 - \frac{jdb 2 \pi m}{T_{y}^{2}}(\omega + m\pi) 
                                \right]} 
    \label{eq_downsample} 
\end{equation} 

\subsection{Review of Filter-banks}  
Consider a two channel filter-bank shown in fig.\ref{fig_fb}. The input-output relation of this filter-bank can be written either in terms polyphase matrices, or in terms of modulation (or alias component) matrices \cite{bk_vaidyanathan_filterbank}, \cite{bk_vetterli_filterbank}.

Given a discrete signal $x(n)$, having ZT $X(z)$, its type-1 polyphase representation is obtained by writing $X(z)$ as $X(z) = X_{0}(z^{2}) + z^{-1}X_{1}(z^{2})$, where $X_{0}(z) = \sum_{n \in I}{x(2n)z^{-n}}$, $X_{1}(z) = \sum_{n \in I}{x(2n+1)z^{-n}}$. Similarly a type-2 polyphase representation is obtained by writing $X(z) = X_{0}(z^{2}) + zX_{1}(z^{2})$, where $X_{0}(z)$ is same as above and $X_{1}(z) = \sum_{n \in I}{x(2n-1)z^{-n}}$. Using these polyphase representations, it can be shown that 

\begin{equation}
\hat{{\bf X}}_{p}(z) = {\bf G}_{p}(z){\bf H}_{p}(z){\bf X}_{p}(z)
    \label{eq_poly_filterbank_z} 
\end{equation} 
where
\begin{eqnarray}
\hat{{\bf X}}_{p}(z) 
& \stackrel{def}{=} &
\left[
   \begin{array}{l}
     \hat{X}_{0}(z) \\
     \hat{X}_{1}(z)
   \end{array}
\right] \nonumber \\ 
{\bf X}_{p}(z) & \stackrel{def}{=}  &
\left[
   \begin{array}{l}
      X_{0}(z) \\
      X_{1}(z)
   \end{array}
\right] \nonumber \\
{\bf G}_{p}(z) & \stackrel{def}{=}  &
\left[
\begin{array}{ll}
G_{00}(z) & G_{10}(z) \\
G_{01}(z) & G_{11}(z)
\end{array}
\right] \nonumber \\
{\bf H}_{p}(z) & \stackrel{def}{=}  & 
\left[
\begin{array}{ll}
H_{00}(z) & H_{01}(z) \\
H_{10}(z) & H_{11}(z)
\end{array}
\right] 
    \label{eq_poly_filterbank_z_defn} 
\end{eqnarray} 
where the polyphase decompositions of analysis and synthesis filters are of different type (i.e., if $\{H_{0}(z),H_{1}(z)\}$ have type-1 polyphase decomposition, then $\{G_{0}(z),G_{1}(z)\}$ will have type-2 representation and vice versa), and the polyphase decomposition of $X(z)$ and $\hat{X}(z)$ are of same type.

The input-output relation can also be expressed in terms of modulation matrices as 

\begin{equation}
\hat{{\bf X}}_{m}(z) = \frac{1}{2} {\bf G}_{m}(z){\bf H}_{m}(z){\bf X}_{m}(z)
    \label{eq_mod_filterbank_z} 
\end{equation} 
where
\begin{eqnarray}
\hat{{\bf X}}_{m}(z) 
& \stackrel{def}{=} &
\left[
   \begin{array}{l}
     \hat{X}(z) \\
     \hat{X}(-z)
   \end{array}
\right] \nonumber \\ 
{\bf X}_{m}(z) & \stackrel{def}{=}  &
\left[
   \begin{array}{l}
      X(z) \\
      X(-z)
   \end{array}
\right] \nonumber \\
{\bf G}_{m}(z) & \stackrel{def}{=}  &
\left[
\begin{array}{ll}
G_{0}(z) & G_{1}(z) \\
G_{0}(-z) & G_{1}(-z)
\end{array}
\right] \nonumber \\
{\bf H}_{m}(z) & \stackrel{def}{=}  & 
\left[
\begin{array}{ll}
H_{0}(z) & H_{0}(-z) \\
H_{1}(z) & H_{1}(-z)
\end{array}
\right] 
    \label{eq_mod_filterbank_z_defn} 
\end{eqnarray} 

It can be seen that the polyphase matrices and the modulation matrices are related as 
\begin{equation}
\left[
\begin{array}{ll}
H_{0}(z) & H_{0}(-z) \\
H_{1}(z) & H_{1}(-z)
\end{array}
\right] 
= 
\left[
\begin{array}{ll}
H_{00}(z) & H_{01}(z) \\
H_{10}(z) & H_{11}(z)
\end{array}
\right] 
\left[
\begin{array}{ll}
1     &  1 \\
z^{-k} &  -z^{-k}
\end{array}
\right] 
    \label{eq_pm} 
\end{equation} 
where $k = 1$ for type-1 polyphase decomposition and $k=-1$ for type-2 polyphase representation.

A square polynomial matrix ${\bf H}(z)$ is said to be {\it paraunitary} (PU) if ${\bf H}(z)\tilde{{\bf H}}(z) = dI$ for some $d > 0$. It can be seen from (\ref{eq_pm}) that if polyphase matrix of analysis filters is PU, then its alias component matrix will also be PU.

If ${\bf H}_{m}(z)$ is PU, then by making ${\bf G}_{m}(z) = z^{-K}\tilde{{\bf H}}_{m}(z)$, where $K$ is an integer appropriate enough to make the synthesis filters causal, $\hat{X}(z) = z^{-K}X(z)$ can be achieved, i.e. the output is delayed version of the input, and perfection reconstruction can be achieved. 

If ${\bf H}_{m}(z)$ is PU, then it can be seen that $H_{0}(z)$ satisfies what is called {\it power-symmetry} property, i.e.
 \begin{equation}
|H(e^{j\omega})|^{2} + |H(e^{j(\omega + \pi)})|^{2} = d 
    \label{eq_ps} 
\end{equation} 
Further $H_{1}(z)$ is related to $H_{0}(z)$ as 
\begin{equation}
H_{1}(z) = c z^{-L}\tilde{H}_{0}(z), \hspace{3mm} |c| = 1, \hspace{3mm} L=odd
    \label{eq_h1_h2_z} 
\end{equation} 

If ${\bf H}_{m}(z)$ is PU, then making ${\bf G}_{m}(z) = z^{-K}{\bf H}_{m}(z)$ will give us PRFB. This gives

\begin{eqnarray}
G_{0}(z) = z^{-K}\tilde{H}_{0}(z) \nonumber \\
G_{1}(z) = z^{-K}\tilde{H}_{1}(z) 
    \label{eq_gh_z} 
\end{eqnarray} 

Thus all the filters of a two channel PU filter bank can be obtained by designing a power symmetric filter $H_{0}(z)$. 

If $H_{0}(z)$ is power-symmetric, then $H_{0}(z)\tilde{H}_{0}(z)$ is an half-band filter. Thus a power-symmetric filter can be designed by designing a half-band filter and computing an appropriate spectral factor (Please refer \cite{bk_vaidyanathan_filterbank} for further details.).

\section{Convolution and Delay in the DTLCT domain}
\label{sec_basic_operations}
   Filtering operation in sub-band decomposition is defined as taking product of the DFT(or ZT) of the signal with the filter. In time domain this corresponds to convolution between the signal and the filter. In continuous FrFT and LCT domain also various convolution theorems have been derived \cite{art_bing_convolution_lct},\cite{art_almeida_conv_frft}.

For LCT sub-band decomposition, it is convenient to define  convolution of $x(n)$ and $h(n)$ as

\begin{equation}
h(n)*x(n) \stackrel{def}{=} \sum_{k=-\infty}^{\infty}{h(k)x(n-k)e^{-\frac{jaT^{2}}{b}k(n-k)}} 
    \label{eq_lct_conv} 
\end{equation} 

Thus if $y(n) = h(n)*x(n)$, then 
\begin{equation}
Y(\omega) = H(\omega)X(\omega)e^{-\frac{j db \omega^{2}}{2 T^{2}}}
    \label{eq_lct_conv2} 
\end{equation} 

With this definition of convolution, it is convenient to define a new {\it delay} operator $D[.]$ on a signal $x(n)$ as 

\begin{equation}
D[x](n) = x(n-1)e^{\frac{j a T^{2}}{2b} (-2n + 1)}.
    \label{eq_delay_defn} 
\end{equation} 

The properties of $D[.]$ are summarized in the following lemma,

\begin{lemma}
The operator $D[.]$ has the following properties,
\begin{enumerate}
    \item $D^{l}[D^{k}[x]](n) = D^{l+k}[x](n)$ for all $k,l \in I$.
    \item If $y(n) = D^{k}[x](n)$ then $Y(\omega) = e^{-jk\omega}X(\omega)$.
    \item If $y(n) = x(n)*h(n)$ then $D^{l}[x](n)*D^{k}[h](n) = D^{l+k}[y](n)$.
\end{enumerate} 

\end{lemma}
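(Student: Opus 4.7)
The strategy is to first derive a closed form for the iterated delay operator, then translate properties 1 and 2 into bookkeeping of quadratic exponential factors, and finally obtain property 3 essentially for free from property 2 together with the convolution formula (\ref{eq_lct_conv2}).

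The first move is to guess and verify by induction on $k \geq 0$ the explicit expression
\begin{equation}
D^{k}[x](n) = x(n-k)\,\exp\!\left[\frac{j a T^{2}}{2b}\bigl(-2kn + k^{2}\bigr)\right],
\end{equation}
and then check that the same formula holds for negative $k$ by direct computation from the definition of the inverse shift. Property~1 then reduces to a clean algebraic identity: applying $D^{l}$ to $D^{k}[x]$ and collecting the exponent yields $\frac{j a T^{2}}{2b}\bigl(-2k(n-l) + k^{2} -2ln + l^{2}\bigr)$, and the cross term $2kl$ combines with $k^{2}+l^{2}$ to give $(k+l)^{2}$, so the exponent coincides with $\frac{j a T^{2}}{2b}\bigl(-2(k+l)n + (k+l)^{2}\bigr)$, which is exactly what the closed form predicts for $D^{k+l}[x](n)$.

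For property~2 I would substitute the closed form from property~1 into the definition (\ref{eq_dtlct}) of the DTLCT and perform the change of summation variable $m = n-k$. The expression breaks into three groups of exponents: the $\frac{a T^{2}}{b} m^{2}$ chirp, cross terms linear in $m$, and constants in $m$. The plan is to verify that the terms $\frac{a k^{2} T^{2}}{b}$ and $\frac{2 a k m T^{2}}{b}$ coming from expanding $(m+k)^{2}$ in the DTLCT kernel are exactly cancelled by the $-k^{2}$ and $-2km$ terms supplied by the delay operator's chirp; what survives is the original DTLCT kernel in $m$ plus an extra $-2k\omega$ inside the $\frac{j}{2}(\cdot)$ bracket, which is precisely $e^{-j k \omega} X(\omega)$. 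The main obstacle is purely this bookkeeping of chirp exponents, not anything conceptual.

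With property~2 in hand, property~3 is immediate: by (\ref{eq_lct_conv2}) and property~2 the DTLCT of $D^{l}[x] * D^{k}[h]$ equals
\begin{equation}
\bigl(e^{-j l \omega} X(\omega)\bigr)\bigl(e^{-j k \omega} H(\omega)\bigr)\, e^{-\frac{j d b \omega^{2}}{2 T^{2}}} = e^{-j (l+k) \omega}\, Y(\omega),
\end{equation}
which by property~2 is the DTLCT of $D^{l+k}[y](n)$. Invertibility of the DTLCT, guaranteed by the sampling/reconstruction results of \cite{art_Zhao_convertLCT}, then allows us to conclude equality of the time-domain signals. No additional estimates are needed.
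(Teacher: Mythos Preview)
Your proof is correct. For properties~1 and~2 you do essentially what the paper does (the paper simply declares them ``straight-forward'' and omits the exponent bookkeeping you spell out). For property~3, however, you take a genuinely different route: the paper works entirely in the time domain, expanding the convolution sum $\sum_{r} D^{l}[x](r)\, D^{k}[h](n-r)\, e^{-jaT^{2}r(n-r)/b}$, inserting the closed form for $D^{l}$ and $D^{k}$, and then re-indexing and regrouping the quadratic exponents until $D^{l+k}[y](n)$ is recognised directly. You instead pass to the transform domain via property~2 and (\ref{eq_lct_conv2}), then invoke invertibility of the DTLCT to return to the time domain. Your argument is cleaner---it sidesteps the cross-term cancellation entirely---at the cost of one extra ingredient, namely injectivity of the DTLCT (which is immediate once one observes from (\ref{eq_dtlct}) that $e^{-jdb\omega^{2}/(2T^{2})} X(\omega)$ is, up to a constant, the ordinary DTFT of the chirped sequence $x(n)e^{jan^{2}T^{2}/(2b)}$). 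The paper's approach is more self-contained but requires the same kind of careful exponent manipulation you already carried out for property~1.
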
 

\begin{proof}
The first two results are straight-forward. To prove the last result let
\begin{eqnarray}
D^{l}[x](n)*D^{k}[h](n)     
     &=& \sum_{r \in I}
         {D^{l}[x](r)D^{k}[h](n-r)e^{-\frac{jaT^{2}}{b}r(n-r)}} \nonumber \\
     &=& \sum_{r \in I}
         {x(r-l)e^{\frac{jaT^{2}(-2rl+l^{2})}{2b} }
          h(n-r-k) e^{\frac{jaT^{2}(-2(n-r)k+k^{2})}{2b} } 
          e^{-\frac{jaT^{2}}{b}r(n-r)}}
    \label{eq_delay} 
\end{eqnarray} 
By re-arranging the index and the power of $e$, we can obtain $D^{l}[x](n)*D^{k}[h](n) = D^{k+l}[y](n)$.
\end{proof} 

\section{A Two Channel LCT Filter Bank}
\label{sec_filter_bank}

Consider the filter bank in fig.\ref{fig_fb}, with convolution as defined in (\ref{eq_lct_conv}). We are interested in writing the input-output relation of this filter bank in terms of polyphase and modulation matrices, if ZT (or DFT) is replaced by DTLCT.  


\begin{figure}[t]
  \centering
  \includegraphics[scale=0.4]{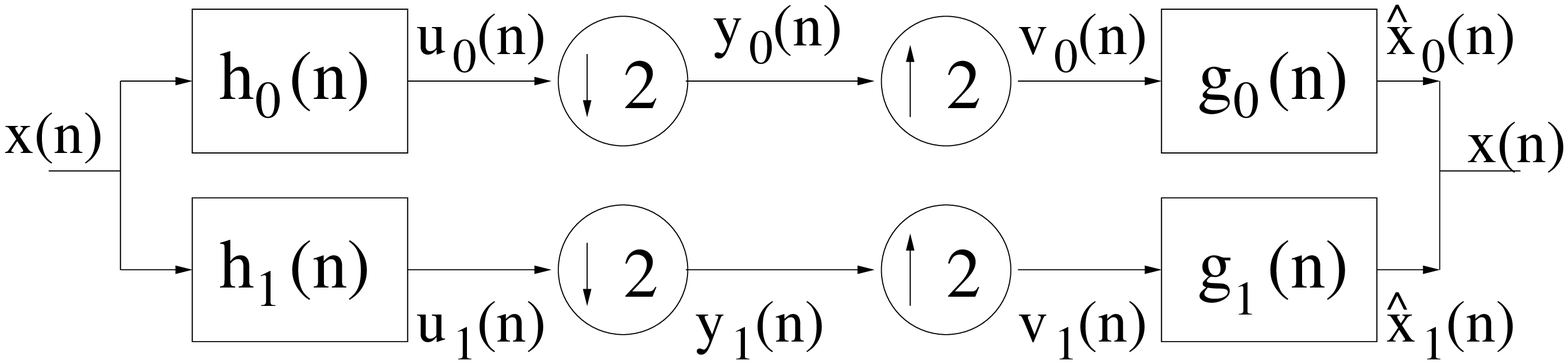}
  \caption{The two channel filter bank.}
  \label{fig_fb} 
\end{figure}

\subsection{Polyphase Domain Analysis}

To define a type-1 polyphase decomposition of $x(n)$, let $x(n) = x_{e}(n) + x_{o}(n)$, where 
\begin{equation}
x_{e}(n) = \left\{
\begin{array}{ll}
x(n) & n = even \\
0    & n = odd
\end{array}
\right.
    \label{eq_xen_defn} 
\end{equation}  
and $x_{o}(n)$ is similarly defined. Let $\bar{x}_{o}(n) = D^{-1}[x_{o}](n)$, then $X(\omega) = X_{e}(\omega) + e^{j\omega}\bar{X}_{o}(\omega)$. Let $x_{0}(n) = x_{e}(n) \downarrow 2$ and $x_{1}(n) = \bar{x}_{e}(n) \downarrow 2$, then type-1 polyphase decomposition of $x(n)$ is given as $X(\omega) = X_{0}(2\omega) + e^{j\omega}X_{1}(2\omega)$.

Similarly, we can obtain type-2 polyphase decomposition of $x(n)$ by putting $\bar{x}_{o}(n) = D[x_{o}](n)$. In this case $X(\omega) = X_{0}(2\omega) + e^{-j\omega}X_{1}(2\omega)$. Note that the sampling time of the polyphase components $T_{x_{0}} = T_{x_{1}} = 2T_{x}$.

\begin{figure}[tpb]
  \centering
  \includegraphics[scale=0.4]{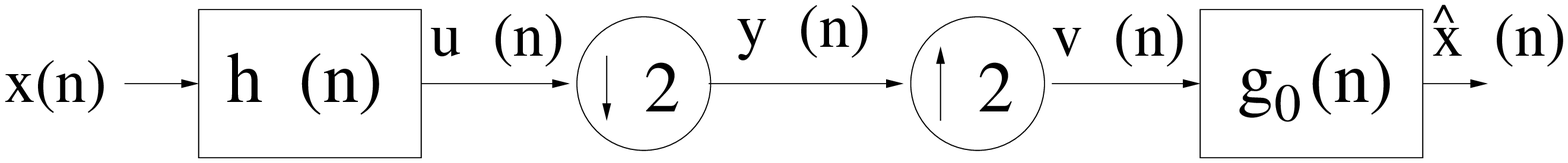}
  \caption{A filter bank channel.}
  \label{fig_ch} 
\end{figure}

The input-output relation of an analysis filter bank channel, shown in fig.\ref{fig_ach}, is given in terms of polyphase components as 

\begin{figure}[t]
  \centering
  \subfigure[]{
    \includegraphics[scale=0.3]{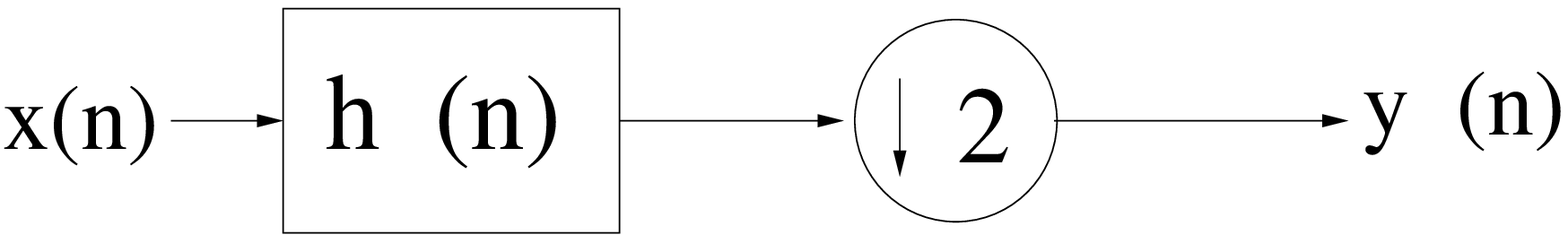}
    \label{fig_ach} 
  }
  \hspace{0.5in}
  \subfigure[]{
    \includegraphics[scale=0.3]{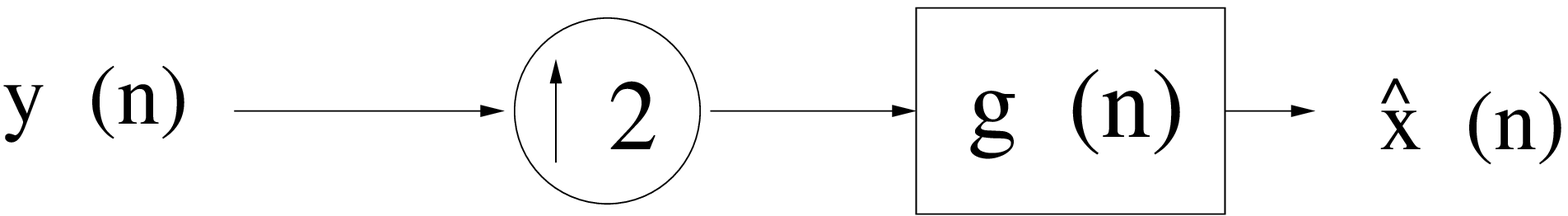}
    \label{fig_sch} 
  }
  \caption{Filter bank channels: (a) Analysis filter bank channel (b) Synthesis filter bank channel}
\end{figure}



\begin{lemma}
If $y(n) = (h(n)*x(n)) \downarrow 2$, then

\begin{equation}
Y(\omega) = e^{\frac{-jdb \omega^{2}}{T_{y}^{2}}}
            \left[
               \begin{array}{ll}
                 H_{0}(\omega)  &  H_{1}(\omega)
               \end{array}
            \right] 
            \left[
               \begin{array}{l}
                 X_{0}(\omega) \\
                 X_{1}(\omega)
               \end{array}
            \right] 
    \label{eq_poly_analysis} 
\end{equation}
where the polyphase components of $h(n)$ and $x(n)$ are of different type.
\end{lemma}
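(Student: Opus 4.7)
The plan is to reduce the analysis filter-bank channel to an LCT convolution at the decimated sampling period $T_{y}=2T_{x}$ and then apply the convolution theorem (\ref{eq_lct_conv2}) at that rate. First I would write $h(n)$ in its type-$2$ polyphase form and $x(n)$ in its type-$1$ form, so that $H(\omega)=H_{0}(2\omega)+e^{-j\omega}H_{1}(2\omega)$ and $X(\omega)=X_{0}(2\omega)+e^{j\omega}X_{1}(2\omega)$, where $H_{0},H_{1},X_{0},X_{1}$ are DTLCTs at period $T_{y}$. In the time domain this amounts to splitting $h$ and $x$ into their even-indexed parts together with the $D^{\pm 1}$-shifts of their odd-indexed parts, as in Section~\ref{sec_filter_bank}.

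Next I would compute $y(n)=(h*x)(2n)$ directly from the convolution definition (\ref{eq_lct_conv}) and split the inner sum by the parity of the summation index $k$. For $k=2m$ the kernel $\exp(-jaT_{x}^{2}k(n-k)/b)$ collapses into $\exp(-jaT_{y}^{2}m(n-m)/b)$, exhibiting the even-$k$ sub-sum as the LCT convolution of $h_{0}$ with $x_{0}$ taken at period $T_{y}$. For $k=2m+1$ the same re-indexing, combined with property~3 of Lemma~1 to push the half-sample $D^{\pm 1}$-shifts through the convolution, identifies the odd-$k$ sub-sum as the LCT convolution of $h_{1}$ with $x_{1}$ at period $T_{y}$, modulo a phase factor of the form appearing in (\ref{eq_poly_analysis}). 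Taking the DTLCT of $y$ at period $T_{y}$ and applying (\ref{eq_lct_conv2}) to each sub-sum then yields $H_{0}(\omega)X_{0}(\omega)$ and $H_{1}(\omega)X_{1}(\omega)$, each multiplied by the same $\omega^{2}$-chirp depending only on $d$, $b$ and $T_{y}$, whose sum is the matrix-vector expression on the right-hand side.

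The main technical obstacle will be the odd-$k$ bookkeeping. After shifting the summation variable by one, the exponent accumulates quadratic-in-$n$, linear-in-$n$ and constant phase contributions, and one must verify that these merge precisely into the chirp prefactor in (\ref{eq_poly_analysis}) with the remaining dependence absorbed into the type-$2$/type-$1$ definitions of $h_{1}$ and $x_{1}$. Parts~2 and~3 of Lemma~1 are the right tools for this: they translate $D$-shifts into DTLCT-multiplications by $e^{\mp jk\omega}$ and ensure that $D$-shifts commute with LCT convolution up to a total shift, which should allow the cancellations to go through cleanly rather than by a brute-force exponent manipulation.
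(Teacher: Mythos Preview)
Your proposal is correct and follows essentially the same route as the paper: split the convolution sum by the parity of the summation index, use property~3 of Lemma~1 to pair the odd-index pieces as $D[x_{o}]*D^{-1}[h_{o}]$ (so that the net shift vanishes and no residual phase survives), and then apply the convolution theorem (\ref{eq_lct_conv2}). The only organizational difference is that the paper stays at the original rate $T_{x}$, applies (\ref{eq_lct_conv2}) once to $u_{e}$ there, and passes to $T_{y}$ at the very end via $Y(\omega)=U_{e}(\omega/2)$, which makes the single chirp prefactor emerge in one step and avoids the odd-$k$ exponent bookkeeping you anticipate.
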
 
\begin{proof}
Let $u(n) = h(n)*x(n)$, then,
\begin{eqnarray}
u_{e}(n) &=& x_{e}(n)*h_{e}(n) + x_{o}(n)*h_{o}(n) \nonumber \\
         &=& x_{e}(n)*h_{e}(n) + D[x_{o}](n)*D^{-1}[h_{o}](n) \nonumber \\
         &=& x_{e}(n)*h_{e}(n) + \bar{x}_{o}(n)*\bar{h}_{o}](n)
    \label{eq_poly_analysis_proof} 
\end{eqnarray} 

Taking DTLCT of both the sides,
\begin{eqnarray}
U_{e}(\omega)
 &=& e^{\frac{-jdb \omega^{2}}{T_{u}^{2}}}(H_{e}(\omega)X_{e}(\omega) + 
     \bar{H}_{o}(\omega)\bar{X}_{o}(\omega)) \nonumber \\
 &=& e^{\frac{-jdb \omega^{2}}{T_{u}^{2}}}(H_{0}(2\omega)X_{0}(2\omega) + 
     H_{1}(2\omega)X_{1}(2\omega)) 
    \label{eq_poly_analysis_proof_2} 
\end{eqnarray} 

By noting that $Y(\omega) = U_{e}(\omega/2)$, and $T_{y} = 2T_{u}$, and substituting it in (\ref{eq_poly_analysis_proof_2}), we obtain the required result.
\end{proof} 

Similarly the input-output relation of a synthesis filter bank channel, shown in fig.\ref{fig_sch}, is given in terms of polyphase components as-

\begin{lemma}
If $\hat{x}(n) = g(n)*(y(n) \uparrow 2)$, then

\begin{equation}
\left[
   \begin{array}{l}
     \hat{X}_{0}(\omega) \\
     \hat{X}_{1}(\omega)
   \end{array}
\right] 
= e^{\frac{-jdb\omega^{2}}{2T_{y}^{2}}}
\left[
\begin{array}{l}
G_{0}(\omega) \\
G_{1}(\omega)
\end{array}
\right] Y(\omega)
    \label{eq_poly_synthesis} 
\end{equation} 
where the polyphase components of both $g(n)$ and $x(n)$ are of same type.
\end{lemma}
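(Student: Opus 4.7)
The plan is to mirror the analysis-channel lemma but in the reverse order: first apply the upsampling relation, then the LCT-convolution relation, and finally split the result into polyphase components. Throughout, the sampling-time bookkeeping will do most of the work, since the quadratic phase that appears in (\ref{eq_lct_conv2}) depends explicitly on $T$.

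First I would set $u(n) = y(n) \uparrow 2$, so that $T_u = T_y/2$ and by (\ref{eq_upsample}) we have $U(\omega) = Y(2\omega)$. Since $\hat{x}(n) = g(n) * u(n)$ and both $g(n)$ and $u(n)$ live on the fine grid $T_{\hat{x}} = T_u = T_y/2$, the LCT convolution identity (\ref{eq_lct_conv2}) gives
\begin{equation*}
\hat{X}(\omega) \;=\; G(\omega)\,U(\omega)\,\exp\!\left(-\tfrac{jdb\,\omega^{2}}{2T_{\hat{x}}^{2}}\right)
            \;=\; G(\omega)\,Y(2\omega)\,\exp\!\left(-\tfrac{2jdb\,\omega^{2}}{T_{y}^{2}}\right).
\end{equation*}

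Next I would write $G$ and $\hat{X}$ in their (common type) polyphase forms. Because the polyphase components are downsampled signals, both $G_{0},G_{1}$ and $\hat{X}_{0},\hat{X}_{1}$ live at sampling time $2T_{\hat{x}} = T_{y}$, which matches the sampling time of $Y$. For the type chosen, $G(\omega) = G_{0}(2\omega) + e^{\pm j\omega}G_{1}(2\omega)$ and $\hat{X}(\omega) = \hat{X}_{0}(2\omega) + e^{\pm j\omega}\hat{X}_{1}(2\omega)$. Substituting the polyphase decomposition of $G$ into the expression for $\hat{X}(\omega)$ yields two terms that already carry the correct $1$ and $e^{\pm j\omega}$ factors demanded by the polyphase expansion of $\hat{X}$.

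The crucial step is to notice that the quadratic phase factor rewrites as
\begin{equation*}
\exp\!\left(-\tfrac{2jdb\,\omega^{2}}{T_{y}^{2}}\right) \;=\; \exp\!\left(-\tfrac{jdb\,(2\omega)^{2}}{2T_{y}^{2}}\right),
\end{equation*}
i.e.\ it is a function of $2\omega$ with the parameter $T_{y}$ in place of $T_{\hat{x}}$. Consequently it can be absorbed cleanly into the $X_{i}(2\omega)$ slots on both sides: matching coefficients of $1$ and $e^{\pm j\omega}$ gives $\hat{X}_{i}(2\omega) = G_{i}(2\omega)Y(2\omega)\exp(-jdb(2\omega)^{2}/(2T_{y}^{2}))$ for $i=0,1$. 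Replacing $2\omega$ by $\omega$ gives the stated identity (\ref{eq_poly_synthesis}).

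The only real obstacle is this alignment of the quadratic phase with the polyphase decomposition; had $T_{\hat{x}}$ not been exactly $T_{y}/2$, the exponent would not factor through $2\omega$ and the polyphase components would be coupled by a nontrivial chirp. Verifying the sampling-time chain $T_{\hat{x}} = T_{u} = T_{y}/2$ after upsampling is therefore what makes the whole calculation collapse.
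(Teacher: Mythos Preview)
Your argument is correct and reaches the same identity, but it is organised differently from the paper's proof. The paper first splits in the \emph{time domain}: writing $v(n)=y(n)\uparrow 2$ (which is supported on even indices), it observes directly that $\hat{x}_{e}(n)=g_{e}(n)*v(n)$ and $\hat{x}_{o}(n)=g_{o}(n)*v(n)$, then applies the delay operator of Lemma~1 to $\hat{x}_{o}$ and takes the DTLCT componentwise. In that order the separation into polyphase components is automatic, and the quadratic phase appears once from (\ref{eq_lct_conv2}) with $T_v=T_y/2$. You instead take the DTLCT of the full convolution first, expand $G(\omega)$ in polyphase, and then ``match coefficients of $1$ and $e^{\pm j\omega}$''. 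That matching is valid, but it implicitly uses uniqueness of the LCT polyphase decomposition (equivalently, the time-domain parity argument the paper makes explicit); it would be worth one line to say why the two pieces you obtain really are $\hat{X}_0$ and $\hat{X}_1$. Apart from that, your transform-domain route and the paper's time-domain route are equivalent, and your observation that the chirp factors through $2\omega$ precisely because $T_{\hat{x}}=T_y/2$ is exactly the mechanism that makes both proofs work.
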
 
\begin{proof}
Let $v(n) = y(n) \uparrow 2$. Then $\hat{x}_{e}(n) = g_{e}(n)*v(n)$ and $\hat{x}_{o}(n) = g_{o}(n)*v(n)$. This gives $D[\hat{x}_{o}](n) = D[g_{o}]{n}*v(n)$. Taking LCT of both the sides, we have
\begin{equation}
\left[
   \begin{array}{l}
      \hat{X}_{e}(\omega) \\
      \bar{\hat{X}}_{o}(\omega)
   \end{array}
\right] 
= e^{\frac{-jdb \omega^{2}}{2T_{v}^{2}}}
\left[
   \begin{array}{l}
      G_{e}(\omega) \\
      \bar{G}_{o}(\omega)
   \end{array}
\right] V(\omega)
    \label{eq_synthesis_proof_2} 
\end{equation} 
By noting that $T_{y} = 2T_{v}$, $V(\omega) = Y(2\omega)$, $X_{e}(\omega) = X_{0}(2\omega)$ and so on, and substituting them in (\ref{eq_synthesis_proof_2}) we get the desired result. 
\end{proof} 

Using the above two lemma, The filter bank equation for the filter-bank in fig.\ref{fig_fb}  is

\begin{equation}
{\bf \hat{X}}_{p}(\omega) =  e^{\frac{-jdb\omega^{2}}{T_{x_{0}}^{2}}}
                         {\bf G}_{p}(\omega){\bf H}_{p}(\omega)
                         {\bf X}_{p}(\omega) 
    \label{eq_poly_filterbank} 
\end{equation} 
where
\begin{eqnarray}
{\bf X}_{p}(\omega) & = &
\left[
   \begin{array}{l}
      X_{0}(\omega) \\
      X_{1}(\omega)
   \end{array}
\right] \nonumber \\
{\bf \hat{X}}_{p}(\omega) & = &
\left[
   \begin{array}{l}
     \hat{X}_{0}(\omega) \\
     \hat{X}_{1}(\omega)
   \end{array}
\right] \nonumber \\
{\bf G}(\omega) & = &
\left[
\begin{array}{ll}
G_{00}(\omega) & G_{10}(\omega) \\
G_{01}(\omega) & G_{11}(\omega)
\end{array}
\right] \nonumber \\
{\bf H}(\omega) & = &
\left[
\begin{array}{ll}
H_{00}(\omega) & H_{01}(\omega) \\
H_{10}(\omega) & H_{11}(\omega)
\end{array}
\right]
    \label{eq_poly_filterbank_def} 
\end{eqnarray} 

\subsection{Modulation Domain Analysis}
To write the filter bank equation in terms of alias component matrices, it is preferable to consider a channel of the filter bank as shown in fig.\ref{fig_ch} . The input-output relation for this channel is given as follows-

\begin{lemma}
  For the channel in fig.\ref{fig_ch} , the input and output are related by
\begin{equation}
\hat{X}(\omega) = \frac{1}{2} \sum_{m=0}^{1}{e^{\frac{-jdb(\omega+m\pi)^{2}}{T_{x}^{2}}}
                              G(\omega)H(\omega+m\pi)X(\omega+m\pi)} 
    \label{eq_modulation_lemma_1}
\end{equation} 

Moreover, $\hat{X}(\omega + \pi)$ is given by

\begin{equation}
\hat{X}(\omega + \pi) = \frac{1}{2} \sum_{m=0}^{1}{e^{\frac{-jdb(\omega+m\pi)^{2}}{T_{x}^{2}}}
                              G(\omega+\pi)H(\omega+m\pi)X(\omega+m\pi)} 
    \label{eq_modulation_lemma_2}
\end{equation} 

\end{lemma}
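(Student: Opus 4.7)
The plan is to decompose the channel in fig.\ref{fig_ch} into its four atomic stages --- convolution with $h(n)$, downsampling by $2$, upsampling by $2$, and convolution with $g(n)$ --- and push the DTLCT through each stage in turn using (\ref{eq_lct_conv2}), (\ref{eq_downsample}), and (\ref{eq_upsample}).

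Concretely, I would first set $u(n)=h(n)*x(n)$ so that (\ref{eq_lct_conv2}) gives $U(\omega)=H(\omega)X(\omega)\,e^{-jdb\omega^{2}/(2T_{x}^{2})}$. Next, for $v(n)=u(n)\downarrow 2$ with $T_{v}=2T_{x}$, (\ref{eq_downsample}) expresses $V(\omega)$ as a weighted sum of two aliased copies of $U$ indexed by $m\in\{0,1\}$. Upsampling $w(n)=v(n)\uparrow 2$ then rescales the argument by (\ref{eq_upsample}) and restores the sampling rate to $T_{x}$, and a second application of (\ref{eq_lct_conv2}) to $\hat{x}(n)=g(n)*w(n)$ introduces $G(\omega)$ together with another $e^{-jdb\omega^{2}/(2T_{x}^{2})}$ factor. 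The hard part is the bookkeeping of exponents: the two convolutional chirps, the aliasing chirp of the downsampler (defined in terms of $T_{v}$, not $T_{x}$), and the frequency dilation from the upsampler must combine so that the $m$-th summand carries precisely $e^{-jdb(\omega+m\pi)^{2}/T_{x}^{2}}G(\omega)H(\omega+m\pi)X(\omega+m\pi)$. This is a routine but delicate expansion, and I would organize it by collecting the $\omega^{2}$, $\omega m\pi$, and $m^{2}\pi^{2}$ contributions separately so that the cancellations become transparent.

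For (\ref{eq_modulation_lemma_2}), I would simply substitute $\omega\mapsto\omega+\pi$ in (\ref{eq_modulation_lemma_1}) and reindex the sum. The $m=0$ summand after substitution already matches the $m=1$ term of the target, while the $m=1$ summand contains $H(\omega+2\pi)X(\omega+2\pi)$ with chirp $e^{-jdb(\omega+2\pi)^{2}/T_{x}^{2}}$. Using the pseudo-periodicity $F(\omega+2\pi)=e^{j2\pi db(\omega+\pi)/T_{x}^{2}}F(\omega)$ for $F\in\{X,H\}$, which follows immediately from (\ref{eq_dtlct}) because $e^{-j2\pi n}=1$ leaves only the $\omega^{2}$ phase to change, the excess chirp on this summand cancels exactly against the shift-induced $(\omega+2\pi)^{2}$ term and the $m=1$ piece collapses to the $m=0$ term of (\ref{eq_modulation_lemma_2}). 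The factor $G(\omega+\pi)$ pulls out of the sum unchanged, which completes the proposal.
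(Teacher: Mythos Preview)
Your proposal is correct and follows essentially the same route as the paper: it too chains the four stages via (\ref{eq_lct_conv2}), (\ref{eq_downsample}), (\ref{eq_upsample}), and (\ref{eq_lct_conv2}) again, collects the chirp exponents to obtain (\ref{eq_modulation_lemma_1}), and then derives (\ref{eq_modulation_lemma_2}) by shifting $\omega\mapsto\omega+\pi$ and invoking the pseudo-periodicity identity (\ref{eq_2pi_relation}). The only differences are notational (the paper names the intermediate signals $U,Y,V$ rather than $u,v,w$).
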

\begin{proof}
By using the following relations
\begin{eqnarray}
U(\omega) &=& e^{-\frac{jdb\omega^{2}}{2T_{x}^{2}}}F(\omega)X(\omega) \nonumber \\
Y(\omega) &=& \frac{1}{2} \sum_{m=0}^{1}{exp(-\frac{jdb 2m\pi(\omega+m\pi)}{T_{y}^{2}})
                            U(\frac{\omega}{2}+m\pi)} 
                            \hspace{3mm} T_{y} = 2T_{x} \nonumber \\
V(\omega) &=& Y(2\omega) \hspace{3mm}    T_{v} = T_{y}/2 = T_{x} \nonumber \\
\hat{X}(\omega) &=& e^{-\frac{jdb\omega^{2}}{2T_{x}^{2}} } G(\omega)V(\omega)
    \label{} 
\end{eqnarray} 
and some algebraic manipulation we can obtain (\ref{eq_modulation_lemma_1}).

To prove (\ref{eq_modulation_lemma_2}), we note that using (\ref{eq_modulation_lemma_1}), we have

\begin{equation}
\hat{X}(\omega+\pi) = \frac{1}{2} \sum_{0}^{1}{e^{\frac{-jdb(\omega+(m+1)\pi)^{2}}{T_{x}^{2}}}
                              G(\omega+\pi)H(\omega+(m+1)\pi)X(\omega+(m+1)\pi)} 
    \label{eq_modulation_lemma_12}
\end{equation} 
Further noting that for any $X(\omega)$
\begin{equation}
X(\omega+2\pi) = e^{\frac{jdb 2\pi(\omega+\pi)}{T_{x}^{2}}}X(\omega)
    \label{eq_2pi_relation} 
\end{equation} 
and substituting from (\ref{eq_2pi_relation}) into (\ref{eq_modulation_lemma_12}) we get (\ref{eq_modulation_lemma_2}).
\end{proof}  

Using the above lemma we can write the following input-output relation for the filter bank shown in fig.\ref{fig_fb} .
\begin{equation}
{\bf \hat{X}}_{m}(\omega) = \frac{1}{2} {\bf G}_{m}(\omega){\bf H}_{m}(\omega)
                            {\bf A}(\omega) {\bf X}_{m}(\omega) 
    \label{eq_mod_filterbank} 
\end{equation} 

\begin{eqnarray}
{\bf X}_{m}(\omega) &=&
\left[
   \begin{array}{l}
      X(\omega) \\
      X(\omega + \pi) 
   \end{array}
\right] \nonumber \\
{\bf \hat{X}}_{m}(\omega) &=&
\left[
   \begin{array}{l}
      \hat{X}(\omega) \\
      \hat{X}(\omega + \pi) 
   \end{array}
\right] \nonumber \\
{\bf G}_{m}(\omega) &=&
\left[
   \begin{array}{ll}
      G_{0}(\omega)       & G_{1}(\omega) \\
      G_{0}(\omega + \pi) & G_{1}(\omega + \pi)
   \end{array}
\right] \nonumber \\
{\bf H}_{m}(\omega) &=&
\left[
   \begin{array}{ll}
      H_{0}(\omega)       & H_{0}(\omega + \pi) \\
      H_{1}(\omega)       & H_{1}(\omega + \pi)
   \end{array}
\right] \nonumber \\
{\bf A}(\omega) &=& 
\left[
   \begin{array}{ll}
      e^{-\frac{jdb\omega^{2}}{T_{x}^{2}}} & 0 \\
      0       & e^{-\frac{jdb(\omega+\pi)^{2}}{T_{x}^{2}}}
   \end{array}
\right] 
    \label{eq_mod_filterbank_def} 
\end{eqnarray} 

Using the relation $X(\omega + 2\pi) = e^{j \frac{2\pi d b}{T^{2}}(\omega + \pi)}X(\omega)$, it can be shown that
\begin{equation}
{\bf X}_{m}(\omega) = {\bf B}(\omega){\bf C}(e^{j\omega}){\bf X}_{p}(\omega)    
    \label{eq_mod_poly} 
\end{equation} 
where
\begin{eqnarray}
{\bf B}(\omega) &=&
\left[
\begin{array}{ll}
   1 & 0 \\
   0 & e^{j \frac{2\pi d b}{T^{2}}(2\omega + \pi)}
\end{array}
\right] \nonumber \\
{\bf C}(\omega) &=&
\left[
\begin{array}{ll}
   1 & e^{j\omega} \\
   1 & -e^{j\omega}
\end{array}
\right] 
    \label{eq_mod_poly_def} 
\end{eqnarray} 

\section{Two Channel Paraunitary Filter Banks}
\label{sec_paraunitary}
A square matrix ${\bf H}(\omega)$ is defined to be a {\it paraunitary} matrix if ${\bf H}_{*}^{T}(\omega){\bf H}(\omega) = dI$ for some $d > 0$ and for all $\omega$. Let ${\bf H}_{m}(\omega)$ in (\ref{eq_mod_filterbank}) be PU, then by setting 
\begin{equation}
{\bf G}_{m}(\omega) = e^{-jK\omega}{\bf A}_{*}(\omega){\bf H}_{m*}^{T}(\omega)   
    \label{eq_gh} 
\end{equation} 
where $K$ is an integer (needed to make the synthesis filters causal), we can obtain perfect reconstruction. 

We now show that as in the case of ZT, ${\bf H}(\omega)$ can be made PU if $H_{0}(\omega)$ is power-symmetric, and $H_{1}(\omega)$ can be obtained from $H_{0}(\omega)$ by inspection.
 
\begin{lemma}
If $H_{0}(\omega)$ is power symmetric, i.e.
\begin{equation}
\left|H_{0}(\omega)\right|^{2} + \left|H_{0}(\omega+\pi)\right|^{2} = 1,
    \label{eq_omega_power_symmetry} 
\end{equation} 
and $H_{1}(\omega)$ is such that
\begin{equation}
H_{1}(\omega) = e^{j\left(\frac{db \omega (\omega + \pi)}{T^{2}} + (2L+1)\omega \right)}H_{0*}(\omega+\pi),
    \label{eq_h0h1} 
\end{equation} 
for an integer $L$, then ${\bf H}(\omega)$ is PU.
\label{th_h1_form_h0}
\end{lemma}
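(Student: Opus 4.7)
The plan is to verify the paraunitary condition $\mathbf{H}_{m*}^T(\omega)\mathbf{H}_m(\omega) = I$ entrywise for the $2\times 2$ modulation matrix $\mathbf{H}_m(\omega)$ defined in (\ref{eq_mod_filterbank_def}). The two diagonal entries are $|H_0(\omega)|^2 + |H_1(\omega)|^2$ and $|H_0(\omega+\pi)|^2 + |H_1(\omega+\pi)|^2$, and the off-diagonal entry is $H_{0*}(\omega)H_0(\omega+\pi) + H_{1*}(\omega)H_1(\omega+\pi)$ (the other off-diagonal is its conjugate). I would organize the argument around these three expressions.

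For the diagonals I would observe that the prefactor in (\ref{eq_h0h1}) is a pure phase, so $|H_1(\omega)|^2 = |H_{0*}(\omega+\pi)|^2 = |H_0(\omega+\pi)|^2$. Hence the first diagonal entry equals $|H_0(\omega)|^2 + |H_0(\omega+\pi)|^2 = 1$ by the power-symmetry hypothesis (\ref{eq_omega_power_symmetry}). The second diagonal entry is handled by replacing $\omega$ with $\omega+\pi$ and using power-symmetry again (which is $\pi$-shift invariant).

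The main obstacle is the off-diagonal entry, because this is where all the chirp/quasi-periodic structure of the DTLCT enters and where the exponents involving $db/T^2$ must conspire to cancel. The plan here is to substitute the explicit formula (\ref{eq_h0h1}) into $H_{1*}(\omega)H_1(\omega+\pi)$, use the quasi-periodicity relation (\ref{eq_2pi_relation}) in the form $H_{0*}(\omega+2\pi) = e^{-j(2\pi db/T^2)(\omega+\pi)}H_{0*}(\omega)$ to replace the $H_{0*}(\omega+2\pi)$ factor, and then collect all the exponents. The $db/T^2$ quadratic piece will read $-\omega(\omega+\pi) + (\omega+\pi)(\omega+2\pi) - 2\pi(\omega+\pi)$, which factors as $(\omega+\pi)[-\omega + (\omega+2\pi) - 2\pi] = 0$, while the linear $(2L+1)$ piece contributes $(2L+1)\pi$; together these give $e^{j(2L+1)\pi} = -1$. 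Therefore $H_{1*}(\omega)H_1(\omega+\pi) = -H_{0*}(\omega)H_0(\omega+\pi)$, and the off-diagonal entry vanishes.

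Combining the three computations yields $\mathbf{H}_{m*}^T(\omega)\mathbf{H}_m(\omega) = I$, which is the paraunitary condition with $d=1$. The only delicate step is the phase accounting in the last paragraph: the role of the quadratic prefactor $e^{j db\omega(\omega+\pi)/T^2}$ in (\ref{eq_h0h1}) is precisely to absorb the mismatch created by the DTLCT's quasi-periodicity (\ref{eq_2pi_relation}), and the odd integer $2L+1$ then supplies the sign needed to kill the off-diagonal. I would highlight this cancellation explicitly, since it is what distinguishes the LCT case from the ZT case (\ref{eq_h1_h2_z}).
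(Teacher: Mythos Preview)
Your proposal is correct and follows essentially the same route as the paper: expand the entries of $\mathbf{H}_{m}(\omega)\mathbf{H}_{m*}^{T}(\omega)$ (equivalently $\mathbf{H}_{m*}^{T}(\omega)\mathbf{H}_{m}(\omega)$), use power-symmetry for the diagonals, and use the quasi-periodicity relation (\ref{eq_2pi_relation}) to show the off-diagonal phase collapses to $e^{j(2L+1)\pi}=-1$. The paper's own proof is only a one-line sketch of exactly this computation, so your version is in fact more explicit.
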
 
\begin{proof}
By expanding each element in ${\bf H}_{m}(\omega){\bf H}_{m*}^{T}(\omega)$,substituting from (\ref{eq_omega_power_symmetry}) and (\ref{eq_h0h1}), and using the relation $H_{i}(\omega + 2\pi) = e^{j \frac{2\pi d b}{T^{2}}(\omega + \pi)}H_{i}(\omega)$, where $i \in \{0,1\}$, we can get ${\bf H}_{m}(\omega){\bf H}_{m*}^{T}(\omega) = I$.
\end{proof} 

The above lemma suggests that just like in the ZT filter bank, problem of designing a two channel PU LCT filter bank can be reduced to the problem of designing a power-symmetric filter, and rest of the filters could be obtained from this filter. However a power-symmetric filter in LCT domain can be obtained from a power-symmetric filter in ZT domain, as shown in the following lemma,

\begin{lemma}
\label{th_power_symmetric}
Let $H_{0}(\omega) = DTLCT(h_{0})$. Let $\ddot{h}_{0}(n)$ be defined as $\ddot{h}_{0}(n) = h_{0}(n)e^{\frac{jan^{2}T^{2}}{2b}}$. Let $\ddot{H}_{0}(e^{j\omega}) = FT(\ddot{h}_{0})$. If $\ddot{H}_{0}(e^{j\omega})$ is power-symmetric, then $H_{0}(\omega)$ will also be power-symmetric.
\end{lemma}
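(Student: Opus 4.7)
The plan is to relate the DTLCT of $h_0$ directly to the ordinary Fourier transform of the chirp-multiplied sequence $\ddot{h}_0$, and then to transfer the power-symmetry property across that identification. Starting from definition (\ref{eq_dtlct}), I would pull the factor $e^{jdb\omega^2/(2T^2)}$ outside the summation (it depends only on $\omega$, not on $n$) and recognize that the chirp term $e^{jan^2T^2/(2b)}$ inside the sum is exactly what converts $h_0(n)$ into $\ddot{h}_0(n)$. What remains inside the sum is then $\sum_n \ddot{h}_0(n) e^{-jn\omega}$, which is precisely $\ddot{H}_0(e^{j\omega})$.

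This should give an identity of the form
\begin{equation}
H_0(\omega) = \sqrt{\tfrac{1}{j2\pi b}}\; e^{jdb\omega^2/(2T^2)}\; \ddot{H}_0(e^{j\omega}).
\label{eq_proposal_link}
\end{equation}
The next step is to take moduli. The outer factor $\sqrt{1/(j2\pi b)}$ has constant magnitude $1/\sqrt{2\pi b}$ (using $b>0$), and the chirp $e^{jdb\omega^2/(2T^2)}$ has unit modulus, so
\begin{equation}
|H_0(\omega)|^2 = \tfrac{1}{2\pi b}\;|\ddot{H}_0(e^{j\omega})|^2.
\end{equation}
The same identity at $\omega+\pi$ gives $|H_0(\omega+\pi)|^2 = \tfrac{1}{2\pi b}|\ddot{H}_0(e^{j(\omega+\pi)})|^2$.

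Adding the two relations and invoking the Fourier-domain power-symmetry of $\ddot{H}_0$ (i.e.\ that $|\ddot{H}_0(e^{j\omega})|^2 + |\ddot{H}_0(e^{j(\omega+\pi)})|^2$ is a positive constant in the sense of (\ref{eq_ps})) yields that $|H_0(\omega)|^2 + |H_0(\omega+\pi)|^2$ is constant in $\omega$, which is the LCT power-symmetry property (\ref{eq_omega_power_symmetry}) up to an overall normalization of $1/(2\pi b)$ on the constant.

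I do not expect any real obstacle here; the argument reduces to a single algebraic manipulation of the DTLCT definition. The only point requiring care is separating cleanly which factors in (\ref{eq_dtlct}) depend on $n$, on $\omega$, or on both, so that the chirp inside the sum can be absorbed into $\ddot{h}_0$ and the chirp outside can be absorbed into a unit-modulus scalar that drops out on taking $|\cdot|^2$. Aside from that bookkeeping, the lemma is essentially a restatement of the fact that LCT and FT magnitudes differ only by a constant once the quadratic phase of the signal is absorbed into the kernel.
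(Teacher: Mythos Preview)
Your proposal is correct and follows essentially the same route as the paper: factor the DTLCT as a unit-modulus chirp in $\omega$ times the ordinary Fourier transform of $\ddot{h}_0$, then take moduli to transfer power-symmetry. The only cosmetic difference is that you carry the normalization constant $\sqrt{1/(j2\pi b)}$ explicitly (correctly noting it merely rescales the power-symmetry constant), whereas the paper silently drops it.
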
 
\begin{proof}
$H_{0}(\omega)$ can be written as
\begin{eqnarray}
H_{0}(\omega) 
&=& \sum_{n}{h_{0}(n) e^{\frac{jan^{2}T^{2}}{2b}} e^{-jn\omega} 
             e^{\frac{jdb\omega^{2}}{2T^{2}} }}  \nonumber \\
&=& e^{\frac{jdb\omega^{2}}{2T^{2}} }\ddot{H}_{0}(e^{j\omega})
    \label{eq_h0_h0dot} 
\end{eqnarray} 

From this it can be seen that 

\begin{equation}
H_{0}(\omega)H_{0}^{*}(\omega) = \ddot{H}_{0}(e^{j\omega})\ddot{H}_{0}^{*}(e^{-j\omega})
    \label{eq_pc_cond} 
\end{equation} 
and so if $\ddot{H}_{0}(e^{j\omega})$ is power-symmetric, then $H_{0}(\omega)$ will also be power-symmetric. 
\end{proof} 

\section{An Example}
In this section we take an example which serves as a demonstration of the LCT filter bank. In this example we take the LCT to be FrFT at an angle $\pi/4$. The $[a,b,c,d]$ matrix will thus be
\begin{equation}
\left[
\begin{array}{ll}
a & b \\
c & d
\end{array}
\right] = \left[
\begin{array}{ll}
\cos(\pi/4) & \sin(\pi/4) \\
-\sin(\pi/4) & \cos(\pi/4) 
\end{array}
\right]  
    \label{eq_abcd} 
\end{equation} 

The input signal $x(n)$ is such that its LCT $X(\omega)$ has peaks at$ \{\frac{30 \pi}{512}, \frac{100 \pi}{512}, \frac{412 \pi}{512},\frac{482 \pi}{512}\}$ , as shown in fig.\ref{fig_input}. The sampling time of the input signal $x(n)$ is taken as $T = 0.05$.


\begin{figure}[t]
  \centering
  \includegraphics[scale=0.75]{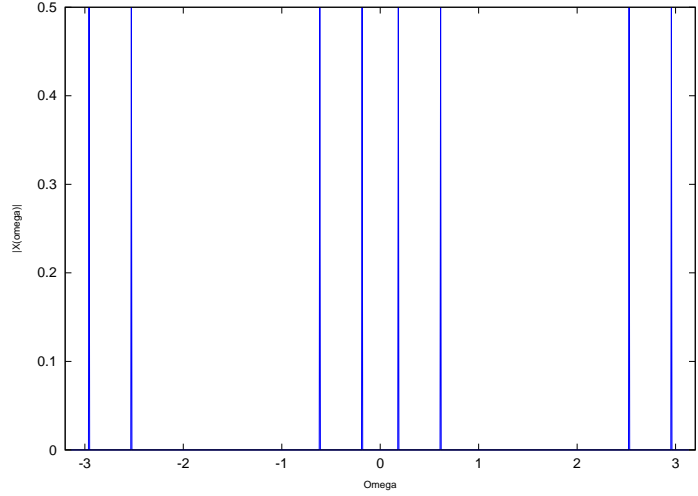}
  \caption{Magnitude of the input signal $X(\omega)$.}
  \label{fig_input} 
\end{figure}

An LCT power-symmetric filter is generated by applying lemma \ref{th_power_symmetric} on a ZT power-symmetric filter. In this example, the ZT power symmetric filter $h(n)$ is taken from \cite{bk_vaidyanathan_filterbank},example 5.3.2. 
The magnitude response is shown in fig(\ref{fig_fth}). Using lemma \ref{th_power_symmetric}, the LCT power symmetric filter $h_{0}(n)$ is given by
\begin{equation}
h_{0}(n) = h(n)e^{-\frac{j}{2}\frac{an^{2}T^{2}}{b}}
    \label{eq_h0_from_h} 
\end{equation} 

\begin{figure}[t]
  \centering
  \includegraphics[scale=0.75]{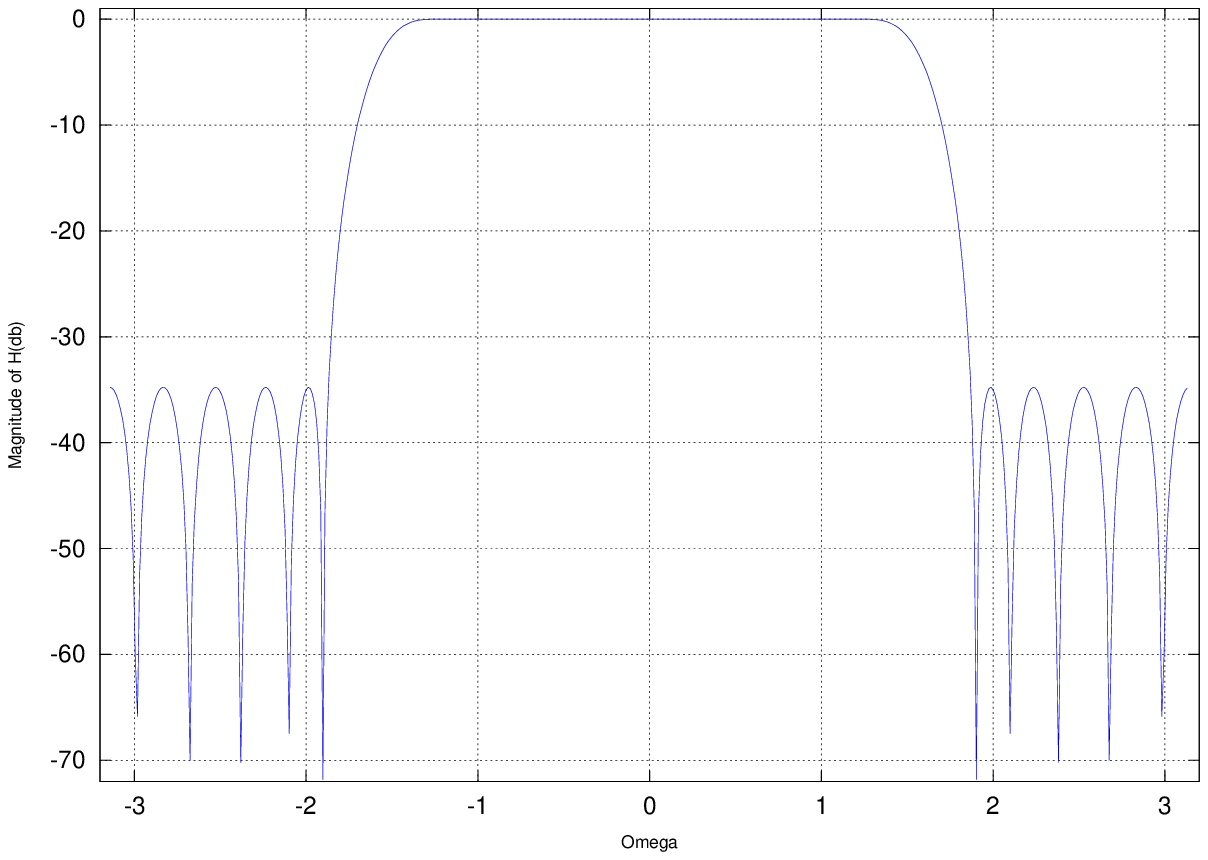}
  \caption{Magnitude(dB) of the power symmetric filter $H(e^{j\omega})$, that is used to generate $H_{0}(\omega)$}
  \label{fig_fth} 
\end{figure}



Filter $h_{1}(n)$ is obtained from $h_{0}(n)$ by applying lemma \ref{th_h1_form_h0}. This gives $h_{1}(n)$ as
\begin{equation}
h_{1}(k) = h_{0*}(N-k)e^{j\phi(k)}, k \in \{0..N\}
    \label{eq_h1_from_h0_2} 
\end{equation}   
where
\begin{equation}
\phi(k) = -\frac{aT^{2}}{2b}\left[(N-k)^{2}+k^{2}\right] + (N-k)\pi -\frac{db}{2T^{2}}\pi^{2}  
    \label{eq_phi_k} 
\end{equation} 

The magnitude response of the LCT of analysis filters is shown in fig.\ref{fig_lcta}.

\begin{figure}[t]
  \centering
  \subfigure[]{
    \includegraphics[scale=0.6]{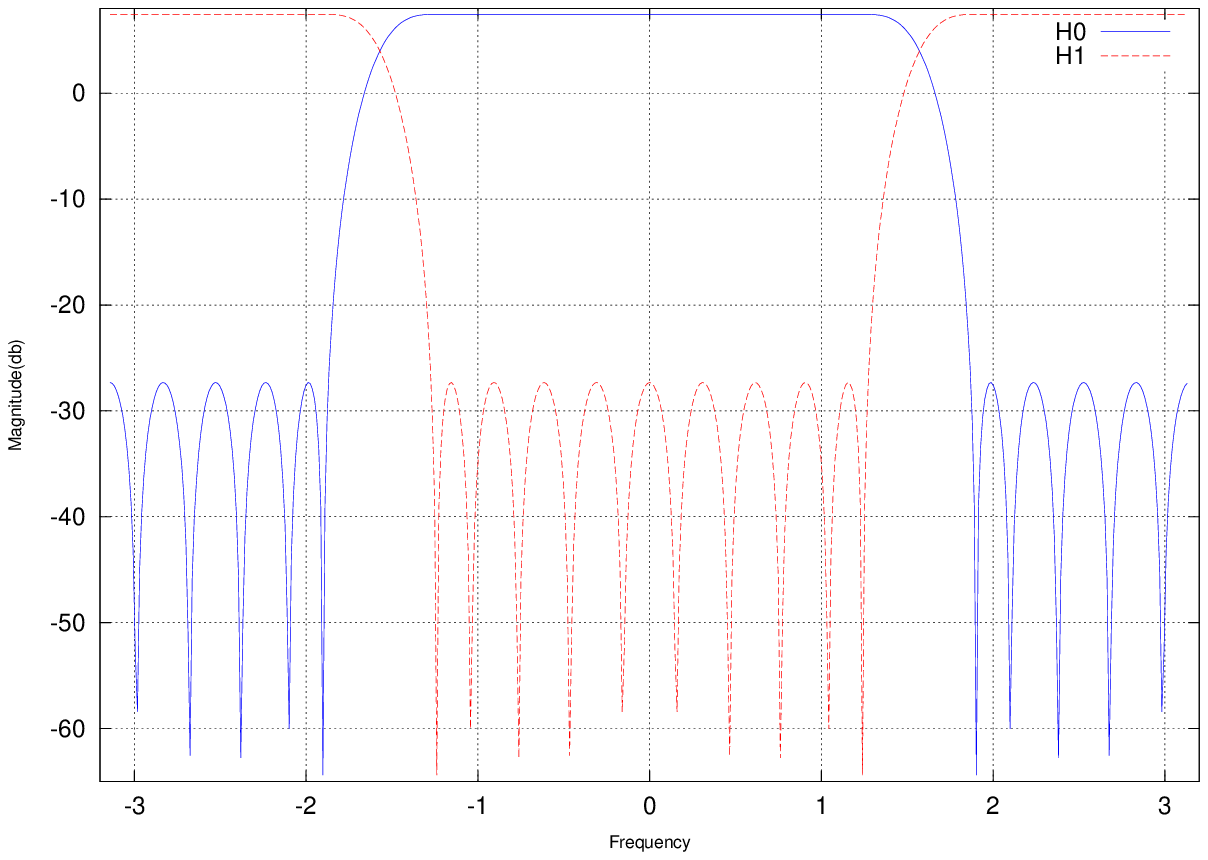}
    \label{fig_lcta} 
  }
  \hspace{0.5in}
  \subfigure[]{
    \includegraphics[scale=0.6]{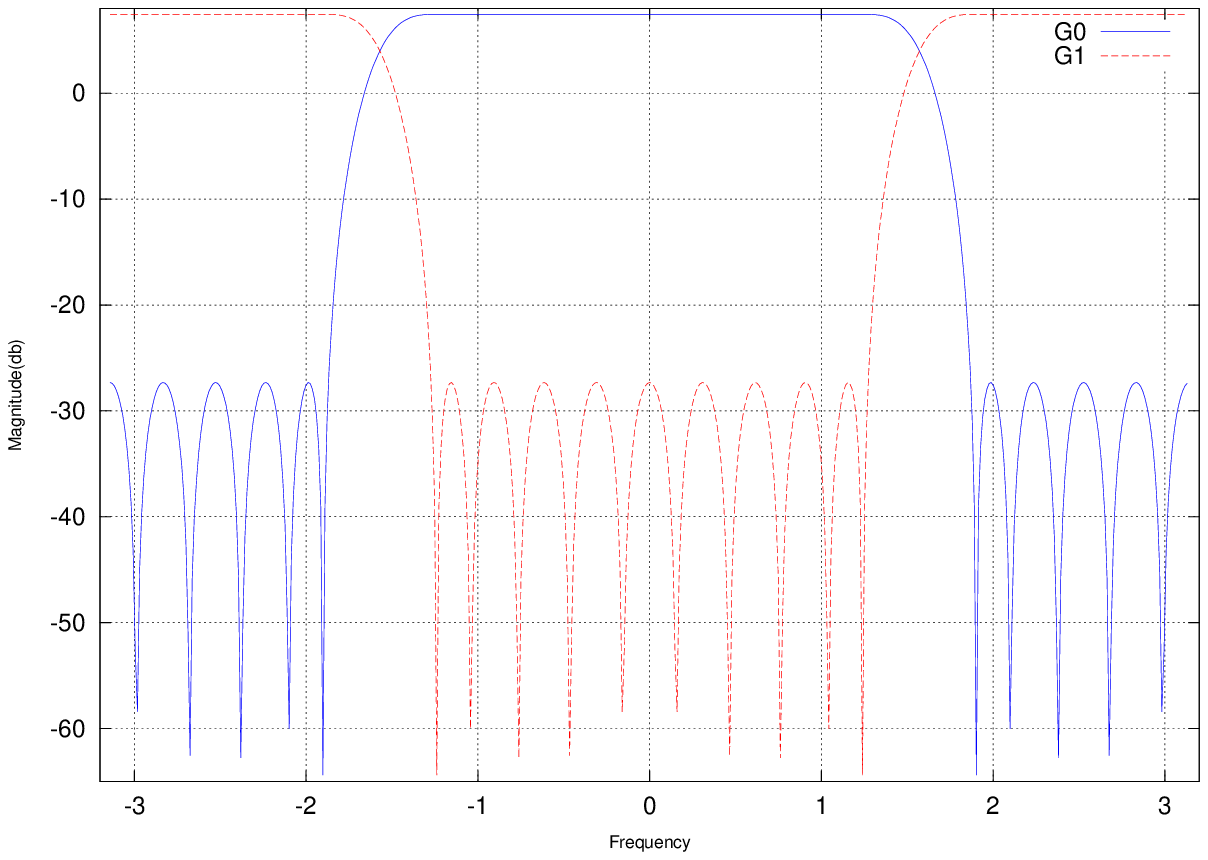}
    \label{fig_lcts} 
  }
  \caption{Magnitude(dB) of the LCT of various filters used in the LCT filter bank: (a) Analysis filters $\{H_{0}(\omega),H_{1}(\omega)\}$ (b) Synthesis filters $\{G_{0}(\omega),G_{1}(\omega)\}$}
\end{figure}


Finally the synthesis filters $\{G_{0}(\omega),G_{1}(\omega)\}$ are obtained from $\{H_{0}(\omega),H_{1}(\omega)\}$ using (\ref{eq_gh}), which gives
\begin{equation}
g_{i}(k) = h_{i*}(N-k)e^{-\frac{j}{2}\frac{aT^{2}}{b}\left[(N-k)^{2}+k^{2}\right]}, k \in \{0..N\}, i \in {0,1}
    \label{eq_gh2} 
\end{equation} 

The magnitude response of the LCT of synthesis filters is shown in fig.\ref{fig_lcts}.


The LCT of analysis filter output $\{y_{0}(n),y_{1}(n)\}$ are shown in fig.\ref{fig_lcty0} and fig.\ref{fig_lcty1}. The peaks at $\{\frac{30 \pi}{256}, \frac{100 \pi}{256}\}$ in $Y_{0}(\omega)$ correspond to peaks at $\{\frac{30 \pi}{512}, \frac{100 \pi}{512}\}$ in $X(\omega)$. Similarly peaks at $\{\frac{30 \pi}{256}, \frac{100 \pi}{256}\}$ in $Y_{1}(\omega)$ correspond to peaks at $\{\frac{482\pi}{512}, \frac{412 \pi}{512}\}$ in $X(\omega)$. 

\begin{figure}[t]
  \centering
  \subfigure[]{
    \includegraphics[scale=0.6]{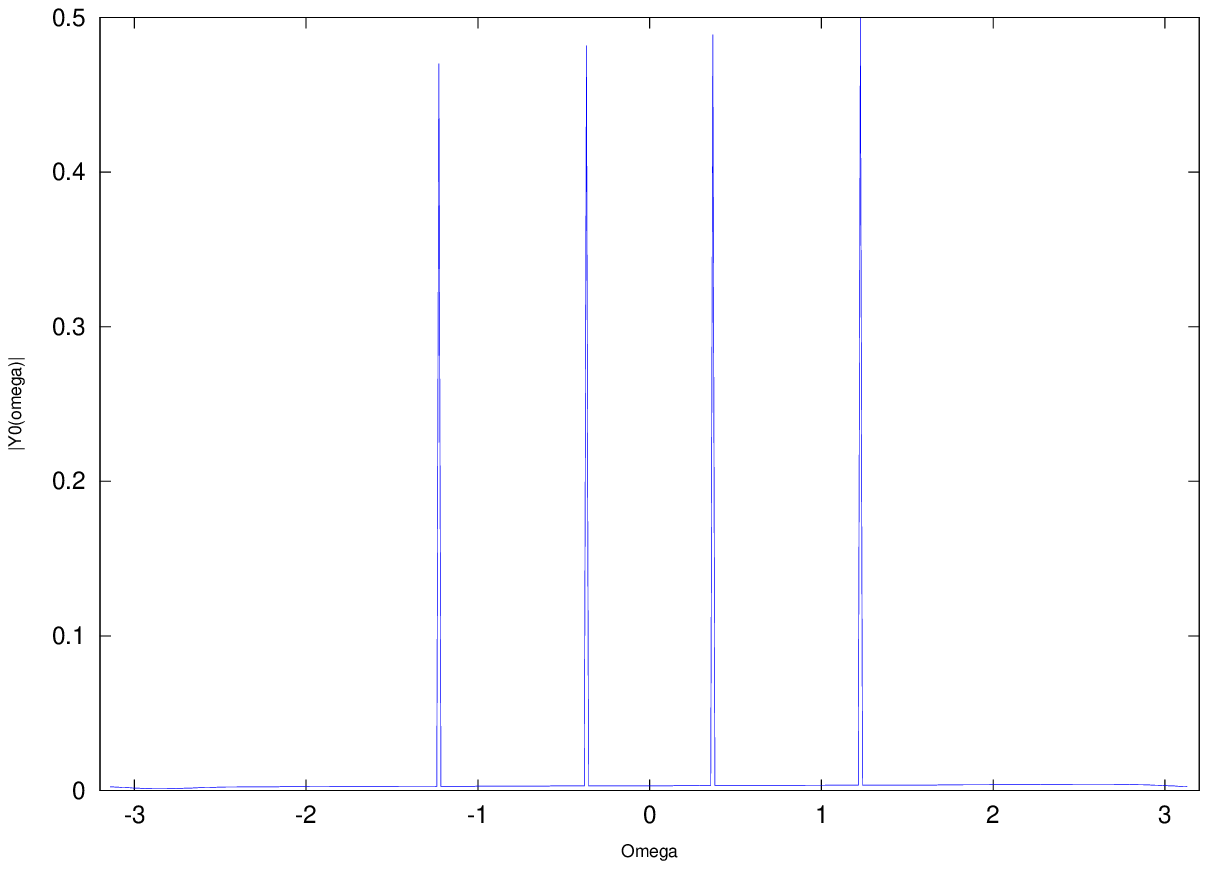}
    \label{fig_lcty0} 
  }
  \hspace{0.5in}
  \subfigure[]{
    \includegraphics[scale=0.6]{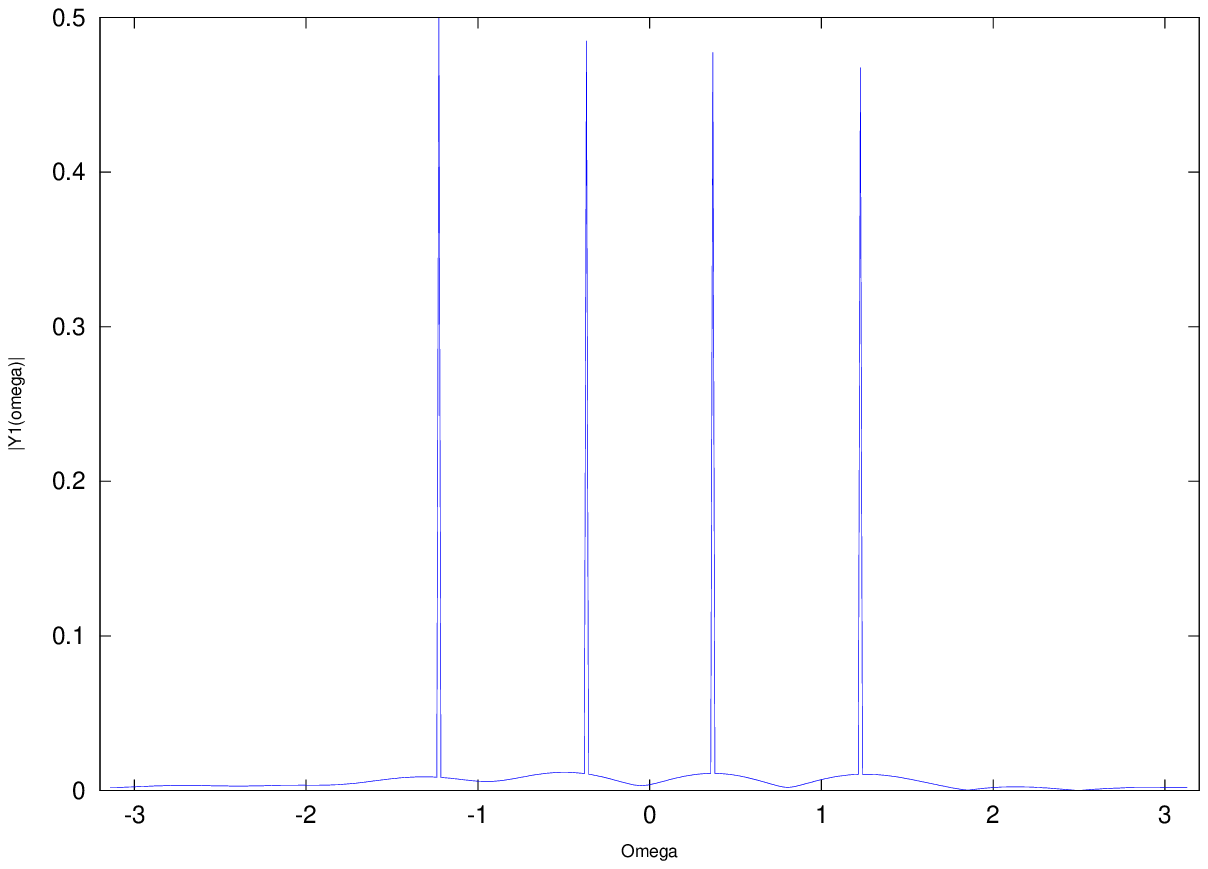}
    \label{fig_lcty1} 
  }
  \caption{Magnitude of analysis filter bank output: (a) $Y_{0}(\omega)$ (b) $Y_{1}(\omega)$.}
\end{figure}



The signal is then reconstructed by passing $\{y_{0}(n),y_{1}(y)\}$ from synthesis filter. The LCT of the synthesis filter output $\hat{X}(\omega)$ is shown in fig.\ref{fig_lcthatx}. It is observed that the output matches with input delayed by $N$ samples.  

\begin{figure}[t]
  \centering
  \includegraphics[scale=0.75]{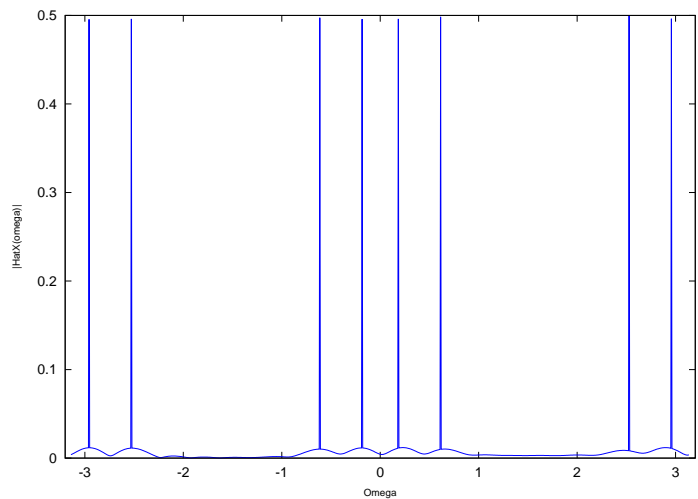}
  \caption{Magnitude of the filter bank output $\hat{X}(\omega)$.}
  \label{fig_lcthatx} 
\end{figure}


\section{Conclusion}
\label{sec_conclude}

In this paper a two channel paraunitary filter bank based on linear canonical transform has been developed. This kind of filter bank can be useful in sub-band decomposition of the signals that are not band limited in the Fourier domain, but band limited in an LCT domain. Input-output relations for such a filter bank in polyphase and modulation domain are derived. It is also shown that such a filter bank can be designed by using standard design procedure for power-symmetric filters in Fourier domain.

The future work in this direction may include generalization of LCT based filter bank to more than two channels, and development of LCT based cosine modulated filter banks. 

\pagebreak



\bibliographystyle{ieeetr}


\begin{thebibliography}{10}

\bibitem{bk_vaidyanathan_filterbank}
P.~P. Vaidyanathan, {\em {Multirate Systems and Filter Banks}}.
\newblock Englewood Cliffs,New Jersey: Prentice Hall PTR, 1993.

\bibitem{bk_vetterli_filterbank}
M.~Vetterli and J.~Kovacevic, {\em {Wavelets and Subband Coding}}.
\newblock Englewood Cliffs,New Jersey: Prentice Hall PTR, 1995.

\bibitem{bk_ozaktas}
H.M.Ozaktas, Z.~Zalevsky, and M.A.Kutay, {\em {The Fractional Fourier Transform
  with Applications in Optics and Signal Processing}}.
\newblock New York: J. Wiley, 2000.

\bibitem{art_almeida_frft}
A.L.Almeida, ``{The Fractional Fourier Transform and Time-Frequency
  Representations},'' {\em IEEE Trans. on Sig.Proc.}, vol.~42, no.~11,
  pp.~3084--3091, 1994.

\bibitem{art_stern_samplingLCT}
A.Stern, ``Sampling of linear canonical transformed signals,'' {\em Signal
  Processing}, vol.~86, pp.~1421--1425, 2006.

\bibitem{bk_wolf_integralt}
K.R.Wolf, {\em {Integral Transforms in Science and Engineering}}.
\newblock New York: Plenum Press, 1979.

\bibitem{art_li_sampling_lct}
R.~T. B.~Z.~Li and W.Yue, ``New sampling formulae related to linear canonical
  transform,'' {\em Science in China, Series F}, vol.~49, no.~5, pp.~591--603,
  2006.

\bibitem{art_candan_series}
C.Candan and H.M.Ozaktas, ``Sampling and series expansion theorems for
  fractional fourier and other transforms,'' {\em Signal Processing}, vol.~83,
  pp.~2455--2457, 2003.

\bibitem{art_tao_samplingFRT}
W.~R.Tao, B.Deng and Y.Wang, ``Sampling and sampling rate conversion in
  fractional fourier transform domain,'' {\em IEEE Trans. on Signal
  Processing}, vol.~56, no.~1, pp.~158--171, 2008.

\bibitem{art_Zhao_convertLCT}
R.~J.~Zhao and Y.Wang, ``Sampling rate convertion of linear canonical
  transform,'' {\em Signal Processing}, vol.~88, pp.~2825--2832, 2008.

\bibitem{art_bing_convolution_lct}
R.~T. D.Bing and W.Yue, ``Convolution theorems for the linear canonical
  transform and their application,'' {\em Science in China, Series F}, vol.~49,
  no.~5, pp.~591--603, 2006.

\bibitem{art_almeida_conv_frft}
A.L.Almeida, ``{Product and Convolution Theorems for the Fractional Fourier
  Transform},'' {\em IEEE Signal Processing Letters}, vol.~4, no.~1,
  pp.~15--17, 1997.

\end{thebibliography}

\end{document}